\theoremstyle{plain}
\newtheorem{theorem}{Theorem}
\newtheorem{lemma}[theorem]{Lemma}
\newtheorem{corollary}[theorem]{Corollary}
\theoremstyle{definition} 
\newtheorem{definition}[theorem]{Definition}
\newcounter{claim}
\renewcommand{\theclaim}{\Alph{claim}}
\newenvironment{claim}{\refstepcounter{claim}%
\par\medskip\par\noindent{\it Claim~\theclaim.~}~\rm}%
{\par\smallskip\par}
\newenvironment{subproof}{\par\noindent{\sl Proof of Claim~\theclaim.~}}%
{$\,\triangleleft$\par\medskip\par}
\def\@gifnextchar#1#2#3{\let\@tempe#1\def\@tempa{#2}\def\@tempb{#3}%
  \futurelet\@tempc\@gifnch}
\def\@gifnch{\ifx\@tempc\@sptoken\let\@tempd\@tempb%
  \else\ifx\@tempc\@tempe\let\@tempd\@tempa\else\let\@tempd\@tempb\fi\fi\@tempd}
\def\SK@set#1{\left\{#1\right\}}
\def\SK@@set#1#2{\{#1\,:\,
    \begin{array}{@{}l@{}}#2\end{array}
\}}
\def\SK@mset#1{\left\{\!\!\left\{#1\right\}\!\!\right\}}
\def\SK@@mset#1#2{\{\!\!\{#1\,:\,
    \begin{array}{@{}l@{}}#2\end{array}
\}\!\!\}}
\def\BIG@set#1{\Big\{#1\Big\}}
\def\BIG@@set#1#2{\Big\{#1\:\Big|\:
    \begin{array}{@{}l@{}}#2\end{array}
\Big\}}
\newcommand{\Set}[1]{\@gifnextchar\bgroup{\SK@@set{#1}}{\SK@set{#1}}}
\newcommand{\Mset}[1]{\@gifnextchar\bgroup{\SK@@mset{#1}}{\SK@mset{#1}}}
\newcommand{\Bigset}[1]{\@gifnextchar\bgroup{\BIG@@set{#1}}{\BIG@set{#1}}}
\newcommand{\refeq}[1]{(\ref{eq:#1})}
\DeclareMathOperator{\cor}{\mathsf{CR}}
\DeclareMathOperator{\dcor}{\mathsf{D-CR}}
\DeclareMathOperator{\wl}{\mathsf{WL}}
\title{On the Expressibility of\\ the Reconstructional Color Refinement}
\author{V.~Arvind\thanks{The Institute of Mathematical Sciences (HBNI)
and Chennai Mathematical Institute, Chennai, India.}, 
Johannes Köbler\thanks{Institut für Informatik, Humboldt-Universität zu Berlin, Germany.},
Oleg Verbitsky${}^\dagger$\,\thanks{Supported by DFG grant KO 1053/8--2. 
On leave from the IAPMM, Lviv, Ukraine.}}
\date{}
\begin{document}

\maketitle

\begin{abstract}
  One of the most basic facts related to the famous Ulam reconstruction conjecture
  is that the connectedness of a graph can be determined by the deck of its vertex-deleted
  subgraphs, which are considered up to isomorphism. We strengthen this result
  by proving that connectedness can still be determined when the subgraphs in the deck are given up to
  equivalence under the color refinement isomorphism test.
  Consequently, this implies that connectedness is recognizable by Reconstruction Graph Neural Networks,
  a recently introduced GNN architecture inspired by the reconstruction conjecture
  (Cotta, Morris, Ribeiro 2021).
\end{abstract}

\section{Introduction}

For a vertex $v$ of a graph $G$, the \emph{vertex-deleted subgraph} $G\setminus v$ is obtained by
removing $v$ along with all incident edges from $G$. The isomorphism type of $G\setminus v$
is sometimes referred to as a \emph{card}, and then the multiset of all cards is called
the \emph{deck} of $G$. The famous Ulam reconstruction conjecture \cite{Ulam60} says that every graph $G$
with more than 2 vertices is, up to isomorphism, reconstructible from its deck.
Though published in 1960, the problem collection \cite{Ulam60} is a descendant
of the much earlier famous \emph{Scottish Book} \cite{Scottish-Book}. According to Harary \cite{Harary74},
the conjecture has already been discussed since 1929 and, about a hundred years later,
it remains a notoriously hard open problem in graph theory.

The subject has a rich literature, and
one of the most frequently taken approaches is examination of how much information about a graph $G$
can be retrieved from its deck. One of the earliest results in this direction
is referred to in \cite{Harary74} as a ``little theorem'': $G$ is connected if and only if
at least two subgraphs in its deck are connected. This implies the following fact
(see also \cite[Theorem 2.2]{Nash-Williams78})

\begin{theorem}[Harary \cite{Harary64}]\label{thm:harary}
  The connectedness of a graph $G$ is determined by the multiset of all
  vertex-deleted subgraphs $G\setminus v$ of~$G$.
\end{theorem}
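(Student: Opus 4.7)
My plan is to derive the theorem from the ``little theorem'' mentioned just before the statement: a graph $G$ on $n\ge 3$ vertices is connected if and only if at least two of its vertex-deleted subgraphs are connected. Since connectedness is an isomorphism invariant, the number of connected cards in the deck is determined by the deck, and the theorem follows immediately. The small cases $n\le 2$ can be dispatched by inspection (with the convention that the empty graph and $K_1$ are connected). So the real content is proving the biconditional.

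For the forward direction, I would use the classical fact that every connected graph with at least two vertices has at least two non-cut vertices. The cleanest argument is via a spanning tree $T$ of $G$: since $T$ has at least two leaves, and removing a leaf $v$ from $T$ leaves a spanning tree of $G\setminus v$, each such $v$ is a non-cut vertex of $G$. Hence $G\setminus v$ is connected for at least two choices of $v$, yielding at least two connected cards.

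For the reverse direction, assume $G$ is disconnected with components $C_1,\dots,C_k$ where $k\ge 2$, and suppose $G\setminus v$ is connected for some $v$. If $v$ lies in $C_i$, then every $C_j$ with $j\ne i$ remains a union of components of $G\setminus v$; for $G\setminus v$ to be connected we must therefore have $k=2$ and $C_i=\{v\}$, i.e.\ $v$ is an isolated vertex whose removal leaves the single remaining component. This can happen for at most one $v$ (assuming $n\ge 3$, since then the ``other'' component has at least two vertices and removing any of its vertices still leaves $v$ isolated in the card). Hence at most one card is connected, contradicting the hypothesis.

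Neither direction presents a real obstacle; the only subtlety is the boundary case of very small $n$ and making sure the ``at most one'' count in the reverse direction is tight, which is why the hypothesis $n\ge 3$ (implicit in the Ulam setting) matters. The entire argument is self-contained once one has the spanning-tree-leaf observation and the component-counting argument above.
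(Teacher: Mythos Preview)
Your derivation matches the paper's: the paper cites Harary's ``little theorem'' and simply notes that Theorem~\ref{thm:harary} follows, without proving the little theorem itself (which you supply correctly via the spanning-tree-leaf and component-counting arguments). One small correction: the case $n=2$ cannot be ``dispatched by inspection,'' since $K_2$ and $\overline{K_2}$ have identical decks $\{K_1,K_1\}$---the hypothesis $n\ge3$ that you later flag as implicit in the Ulam setting is needed from the outset, and with it your argument is complete.
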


Recall that the subgraphs $G\setminus v$ are considered up to isomorphism,
which means that the full structural information about them is available.
It is natural to ask which partial information suffices to decide the connectedness of $G$.
For example, the aforementioned ``little theorem'' of Harary shows that it would be just enough
to know the connectedness of each card in the deck. The question addressed in this note
is whether connectedness can still be decided when the subgraphs $G\setminus v$ are considered up to
an important relaxation of the isomorphism relation, namely up to equivalence
under the color refinement test (also known as one-dimensional Weisfeiler-Leman equivalence).

\emph{Color refinement} iteratively computes an isomorphism-invariant coloring of
the vertex set of an input graph $G$.
Let $\cor(G)$ denote the mutisets of colors assigned by the algorithm to the vertices of $G$
(see Section \ref{s:prel} for definitions).
Color refinement \emph{distinguishes} two graphs $G$ and $H$ if $\cor(G)\ne\cor(H)$.
If $\cor(G)=\cor(H)$, the graphs are called \emph{CR-equivalent}.
Since $\cor(G)$ is a graph invariant, CR-equivalence is a coarser equivalence relation
than isomorphism of graphs.

CR-equivalence has natural characterizations discovered in different research contexts.
Two graphs are CR-equivalent exactly when they are equivalent in the
two-variable first-order logic with counting quantifiers
\cite{ImmermanL90}, when they have equally many vertices and share a
common covering \cite{Angluin80}, when they are fractionally
isomorphic \cite{ScheinermanU97}, and when the number of homomorphisms
from every tree to these graphs is the same~\cite{Dvorak10}.

Color refinement is a versatile tool with applications in diverse fields \cite{Angluin80,GroheKMS21,RiverosSS24}.
Most prominently, it is a practical method of isomorphism testing.
The graph invariant $\cor(G)$ can be computed efficiently \cite{ImmermanL90}
and identifies the isomorphism type of $G$ for almost all graphs \cite{BabaiES80}.
In machine learning, color refinement is used for analysis and comparison of
graph-structured data \cite{ShervashidzeSLMB11} and for understanding the expressiveness of graph
neural networks (GNNs)~\cite{MorrisLMRKGFB23}.

Our main result is the following strengthening of Theorem~\ref{thm:harary}.

\begin{theorem}\label{thm:main}
  The connectedness of a graph $G$ is determined
  by the multiset of the invariants $\cor(G\setminus v)$ for all
  vertex-deleted subgraphs of~$G$.
\end{theorem}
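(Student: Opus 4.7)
I would prove Theorem~\ref{thm:main} by contradiction: assume two graphs $G$ and $G'$ on $n$ vertices share the same multiset of card CR-invariants while $G$ is connected and $G'$ is disconnected, and derive a contradiction.

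Two ingredients drive the approach. The first is Harary's ``little theorem'' cited just before Theorem~\ref{thm:harary}: a graph is connected iff at least two of its cards are connected. So $G$ has at least two connected cards, while $G'$ has at most one. The second is the additivity of $\cor$ over disjoint unions: under a canonical naming of stable colors, $\cor(H_1\sqcup H_2) = \cor(H_1)\uplus\cor(H_2)$, since a vertex's stable color depends only on the connected component that contains it. The central obstruction the proof must overcome is that $\cor$ alone does not distinguish connected from disconnected graphs --- the standard example being $\cor(C_6)=\cor(2C_3)$ --- so connectedness cannot be read off from each card's CR-invariant individually.

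Writing $G' = C_1\sqcup\cdots\sqcup C_k$ with $k\ge 2$, additivity gives $\cor(G'\setminus v) = \cor(C_i\setminus v)\,\uplus\,\biguplus_{j\ne i}\cor(C_j)$ whenever $v\in C_i$. Hence each multiset $\cor(C_j)$ is a common sub-multiset of all $n-|C_j|$ cards of $G'$ indexed by vertices outside $C_j$, so the CR-deck of $G'$ carries a rigid ``block'' structure. My plan is to transfer this structure to the CR-deck of $G$ via the assumed equality, then argue that a connected $G$ cannot accommodate such a rigid common factor across so many cards unless $G$ itself splits into matching components, yielding the contradiction. A natural setup is to focus on a smallest component $C_j$ of $G'$ (which maximizes $n-|C_j|$) and play it against the two guaranteed connected cards of $G$ from Harary's little theorem.

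The main anticipated obstacle is the step ``a common CR-sub-multiset across many cards of $G$ implies a genuine disjoint component of $G$.'' This implication is not automatic, because CR-equivalence is strictly weaker than isomorphism: for instance, $\cor(C_6)$ admits the decomposition $\cor(C_3)\uplus\cor(C_3)$ even though $C_6$ is connected. Closing this gap will likely require a careful combinatorial count --- for example, showing that a card's CR-invariant can exhibit a ``spurious'' common factor at only a controlled number of vertices, and that this number is strictly smaller than the $n-|C_j|$ forced by $G'$'s block structure, which combined with Harary's lower bound of two connected cards of $G$ produces the contradiction.
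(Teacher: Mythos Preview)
Your setup is correct and you have put your finger on exactly the right obstacle, but the proposal stops precisely where the real work begins. The step you flag --- ``a common CR-sub-multiset across many cards of $G$ forces a genuine disjoint component of $G$'' --- is not merely a technicality to be patched by a count; it is the whole theorem. Your own example $\cor(C_6)=\cor(C_3)\uplus\cor(C_3)$ already shows that a connected graph can have a CR-invariant that factors spuriously, and nothing in your outline explains why this cannot happen simultaneously for $n-|C_j|$ cards. The hoped-for bound ``a spurious factor can occur at only a controlled number of vertices'' is asserted but not argued, and I do not see a route to it along these lines. Harary's little theorem is also less useful here than it looks: it tells you $G$ has two connected cards and $G'$ at most one, but since $\cor$ cannot detect connectedness of a card, the deck bijection need not match connected cards to connected cards, so the ``two versus one'' count does not transfer.

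The paper's proof takes a different route and never tries to extract a sub-multiset factor. After identifying $V(G)$ with $V(H)$ so that $\cor(G\setminus v)=\cor(H\setminus v)$ for every $v$, it invokes a lemma of Nash-Williams to conclude that the \emph{iterated degree} $C^2$ agrees in $G$ and $H$ at every vertex --- this recovers local numerical data you cannot get from the sub-multiset picture alone. It then works with \emph{CR-similarity} (equality of the \emph{set} of stable colors, not the multiset), for which a universal-cover argument gives a rigidity lemma: a connected graph is never CR-similar to a proper subgraph of itself, and one cannot have $A\equiv B'$, $B\equiv A'$ with $A',B'$ proper subgraphs of connected $A,B$. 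Using this, the proof shows that all non-cut vertices of $G$ lie in a single component $H_1$ of $H$, that in fact they fill $H_1$ completely, and that $H_1$ must be regular. A final look at a leaf block of the block-cut tree of $G$ then produces a degree mismatch. The key ideas you are missing are the Nash-Williams iterated-degree lemma and the set-level (rather than multiset-level) similarity governed by universal covers; these give structural leverage that the sub-multiset viewpoint does not.
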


Note that the invariant $\cor(G)$ is not powerful enough to decide whether
or not $G$ is connected. Indeed, $\cor(C_6)=\cor(2C_3)$, where $C_n$ denotes
the cycle graph of length $n$, and $2C_3$ stands for the vertex-disjoint union of
two copies of $C_3$. This cuts off the easiest route to Theorem \ref{thm:main}
through the potential use of the ``little theorem'' of Harary and shows that
running color refinement on the deck of a graph can be more beneficial than
running it on the graph alone. Theorem \ref{thm:main} is interesting in several respects,
which we discuss now.

\paragraph{Consequences for GNNs.}
The expressive power of a representative class of GNNs is characterizable
in terms of color refinement \cite{MorrisRFHLRG19}. In the search for a more expressive GNN architecture,
Cotta, Morris, Ribeiro \cite{CottaMR21} introduce the model of \emph{$k$-Reconstruction GNNs},
whose conceptual novelty consists in enhancing the standard GNNs by running them
on the deck of $k$-vertex subgraphs of a graph $G$ rather than on $G$ itself. The simplest instructive example,
highlighted in \cite{CottaMR21}, is given by considering the 6-cycle graph $C_6$.
As we just mentioned, $C_6$ is not identified by $\cor(C_6)$. However, $C_6$
is reconstructible from its deck, which consists of six copies of the 5-path graph $P_5$.
Since $P_5$ is identified by $\cor(P_5)$, the cycle graph $C_6$ is identified by
color refinement applied to its deck. We state an immediate consequence of Theorem \ref{thm:main},
providing further evidence that $k$-Reconstruction GNNs can often be more powerful.

\begin{corollary}
The connecteness of an $n$-vertex graph is recognizable by $(n-1)$-Reconstruction GNNs.
\end{corollary}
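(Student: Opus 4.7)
Plan. I would try to adapt Harary's classical argument for Theorem~\ref{thm:harary} to the CR setting, which is delicate because $\cor$ alone cannot detect connectedness (e.g.\ $\cor(C_6)=\cor(2C_3)$).

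First, I would show that the multiset $\{\cor(G\setminus v)\}_v$ already determines $\cor(G)$ itself. The number of vertices, the number of edges, and the degree sequence of $G$ are immediate, since $\cor$ determines these on each card and $\sum_v|E(G\setminus v)|=(n-2)|E(G)|$ recovers $|E(G)|$. For the full invariant $\cor(G)$, I would use Dvořák's characterization (CR-equivalence $=$ equal tree-homomorphism counts) together with Kelly's reconstruction identity $\sum_v h_T^{\mathrm{inj}}(G\setminus v)=(n-|V(T)|)\,h_T^{\mathrm{inj}}(G)$ applied to each pattern, combined with Möbius inversion over the partition lattice of $V(T)$ to pass between hom and injective-hom counts.

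Next I would handle the easy direction: if $G$ is connected, Harary's little theorem yields at least two connected cards $G\setminus v$, so at least two CR-types appearing in the deck are realized by connected graphs. The harder direction is the converse: if $G$ is disconnected, no connected graph $\widetilde G$ shares its CR-deck. I would argue by contradiction. Suppose $\widetilde G$ is connected and has the same CR-deck as $G=G_1\sqcup\cdots\sqcup G_k$ with $k\ge2$. From the previous step $\cor(\widetilde G)=\cor(G)$, so both graphs carry the same equitable-partition quotient. Each card decomposes as $G\setminus v=(G_i\setminus v)\sqcup\bigsqcup_{j\ne i}G_j$, and the plan is to use the full family of $n$ cards to propagate this decomposition all the way to $\widetilde G$ itself, contradicting its connectedness.

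The main obstacle is this last step. A naive ``count CR-types that are realizable by some connected graph'' strategy is not enough: for example, $G=K_4\sqcup C_3$ has four cards of CR-type $\cor(2C_3)=\cor(C_6)$, every one of which is realizable by the connected graph $C_6$, yet $G$ is disconnected. What saves the argument, and where I expect the real work to lie, is that combining all $n$ cards of $K_4\sqcup C_3$ forces each removed vertex to be adjacent precisely to the three other degree-$3$ vertices, so the degree-$3$ vertices must span a $K_4$ and the degree-$2$ vertices a $C_3$, ruling out any connected realization. I expect the general proof to pursue this same pattern: use the rigidity of the common equitable partition (via the common-cover or fractional-isomorphism characterizations of CR) to show that the joint constraints imposed by all cards simultaneously force $\widetilde G$ to split into components matching $G_1,\ldots,G_k$.
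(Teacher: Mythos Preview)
Your plan has a genuine gap at the very first step. You propose to show that the multiset $\Mset{\cor(G\setminus v)}_v$ determines $\cor(G)$, but this is exactly the open question flagged in the paper's introduction (whether $\dcor(G)$ determines $\cor(G)$). The specific route you suggest---Dvo\v{r}\'ak plus Kelly plus M\"obius inversion---breaks down concretely. Kelly's identity $\sum_v h_T^{\mathrm{inj}}(G\setminus v)=(n-|V(T)|)\,h_T^{\mathrm{inj}}(G)$ does hold, but to feed it you would need to read off $h_T^{\mathrm{inj}}(G\setminus v)$ from $\cor(G\setminus v)$, and injective tree-homomorphism counts are \emph{not} CR-invariants: for instance $h_{P_4}^{\mathrm{inj}}(C_6)>0$ while $h_{P_4}^{\mathrm{inj}}(2C_3)=0$, yet $\cor(C_6)=\cor(2C_3)$. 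The M\"obius inversion you invoke expresses $h_T^{\mathrm{inj}}$ in terms of $\mathrm{hom}(T/\pi,\cdot)$ for quotients $T/\pi$ that are in general not trees, so Dvo\v{r}\'ak's theorem gives no handle on those summands. Consequently you cannot recover $\cor(G)$ this way, and everything downstream that relies on $\cor(\widetilde G)=\cor(G)$ (in particular the ``common equitable partition'' you want to exploit) is unsupported.

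The paper's proof of the underlying theorem avoids this obstruction entirely: it never reconstructs $\cor(G)$. Instead it uses Nash-Williams' lemma to recover only the iterated degrees $C^2_G(v)$ from the CR-deck, and then works with a coarser notion of CR-\emph{similarity} (equality of colour \emph{sets}, characterised via universal covers). The contradiction is obtained by a structural analysis of the block--cut tree of the putative connected graph $G$: one shows that all non-cut vertices of $G$ lie in a single component $H_1$ of $H$, that this forces $H_1$ to be regular, and then a degree comparison at a leaf block yields the contradiction. Your closing intuition---that the joint constraints from all $n$ cards must force the component structure---is in the right spirit, but the mechanism the paper uses is this block--cut/universal-cover argument rather than equitable-partition rigidity, precisely because the common equitable partition of $G$ and $H$ is not known to be available.
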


\paragraph{Graph reconstruction.}
Theorem \ref{thm:main} shows that a prototypical reconstructible
graph property can be recognized even from an incomplete information about the deck.
The research on the reconstruction conjecture provides many results of this kind,
and we here mention two of them which concern connectedness.
One can determine the connectedness of an $n$-vertex graph from any $\lfloor n/2\rfloor + 2$
of its cards \cite{BowlerBFM11}.
This can also be done using the $k$-deck of a graph whenever $k\ge 9n/10$ \cite{GroenlandJST23}.
Note in this respect that the $k$-deck of a graph determines its $k'$-deck whenever $k'\le k$
and, therefore, the $k$-deck with a smaller $k$ potentially provides less information.
Turning back to Theorem \ref{thm:main}, it is natural to combine it with the logical
characterization of color refinement \cite{ImmermanL90}. As a consequence, the connectedness
of a graph can be determined based on the information about its deck expressible in
the two-variable logic with counting quantifiers.

\paragraph{Isomorphism invariants of graphs.}
Let $\dcor(G)$ denote the multiset of $\cor(G\setminus v)$ over all vertices $v$ of $G$.
This graph invariant is interesting to compare not only with with $\cor(G)$ but
as well with other related invariants.
Color refinement is also commonly known as the one-dimensional Weisfeiler-Leman algorithm.
Weisfeiler and Leman \cite{WLe68} actually introduced the two-dimensional version, exploiting
the same idea as color refinement but coloring pairs of vertices instead of single vertices.
Let $\wl(G)$ denote the graph invariant computed by this algorithm on input $G$.
A higher dimension results in a more powerful isomorphism test and, in particular,
$\cor(G)$ is determined by $\wl(G)$. It is not hard to show that $\dcor(G)$
is also determined by $\wl(G)$. Whether $\cor(G)$, $\dcor(G)$, and $\wl(G)$
form a hierarchy of graph invariants linearly ordered by their strength is an open question.
More precisely, it is unknown whether or not $\cor(G)$ is determined by $\dcor(G)$.
This question is discussed in \cite[Section 6.8]{ScheinermanU97} and \cite[Section 4.1]{CottaMR21}.
As we mentioned above, $\cor(G)$ does not allow us to see whether $G$ is connected.
On the other hand, it is well known that the connectedness of a
graph $G$ can be determined from $\wl(G)$. Theorem \ref{thm:main} provides
a formal strengthening of the last fact by showing that the connectedness of $G$
is determined even by~$\dcor(G)$.

\section{Color refinenemt}\label{s:prel}

The vertex set of a graph $G$ is denoted by $V(G)$.
The \emph{neighborhood} $N_G(x)$
of a vertex $x\in V(G)$ consists of all vertices adjacent to $x$.
We write $\deg_G x=|N_G(x)|$ to denote the degree of~$x$.

On an input graph $G$, the \emph{color-refinement} algorithm (to be
abbreviated as \emph{CR}) iteratively computes a sequence of colorings
$C^r_G$ of $V(G)$.  The initial coloring $C^0_G$ is monochromatic,
i.e., $C^0_G(x)=C^0_G(x')$ for all $x,x'\in V(G)$, where the same initial color is used for all $G$.
Each subsequent color of a vertex $x$ is obtained from the preceding coloring by the rule
$$
C^{r+1}_G(x)=\Mset{C^{r}_G(y)}_{y\in N_G(x)},  
$$
where $\Mset{}$ denotes a multiset.
Note that $C^{1}_G(x)=C^{1}_G(x')$ exactly when $\deg_G(x)=\deg_G(x')$, that is,
the color $C^{1}_G(x)$ can just be seen as the degree of the vertex $x$.

We define $C_G(x)=(C^0_G(x),C^1_G(x),C^2_G(x),\ldots)$ as the sequence of $C^r_G(x)$ for all $r$
and set $\cor(G)=\Mset{C_G(x)}_{x\in V(G)}$. We say that CR
\emph{distinguishes} two graphs $G$ and $H$ if $\cor(G)\ne\cor(H)$.
If $G$ and $H$ are indistinguishable by CR, i.e.,
$\cor(G)=\cor(H)$, they are referred to as \emph{CR-equivalent}.

Defining $C_G(x)$ as an infinite sequence facilitates proving Theorem \ref{thm:main}
in the subsequent sections. For example, this definition is needed for the equivalences
in Lemma \ref{lem:sim-conn} below. Color refinement is, nevertheless, an absolutely
practical procedure. Indeed, if $n$-vertex graphs $G$ and $H$ are distinguishable by CR,
then they are distinguishable after the $n$-th refinement round at latest in the sense that
$\Mset{C^n_G(x)}_{x\in V(G)}\ne\Mset{C^n_H(x)}_{x\in V(H)}$.
Note in this respect that the color names of $C^{r}_G(x)$ increase
exponentially with $r$ if they are encoded in a straightforward way. This can
be avoided by renaming the colors synchronously on $G$ and $H$ after
each round.

\section{Useful lemmas}

As $C^{1}_G(x)=\deg_G x$, we have $C^{2}_G(x)=\Mset{\deg_G(y)}{y\in N_G(x)}$, which
can be referred to as the \emph{iterated degree} of $x$.
We begin with a particularly useful fact about the reconstructibility of iterated degrees.

\begin{lemma}[{Nash-Williams \cite[Lemma 3.3]{Nash-Williams78}}]\label{lem:GHuv}
  Suppose that graphs $G$ and $H$ have the same deck up to CR-equivalence, that is,
  $$\Mset{\cor(G\setminus x)}_{x\in V(G)}=\Mset{\cor(H\setminus y)}_{y\in V(H)}.$$
  If $\cor(G\setminus u)=\cor(H\setminus v)$
  for a vertex $u\in V(G)$ and a vertex $v\in V(H)$, then $C^2_G(u)=C^2_H(v)$.
\end{lemma}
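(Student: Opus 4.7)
The plan is to recover the multiset $C^2_G(u)$ of neighbour-degrees by tracking how the deletion of $u$ perturbs the degree sequence of $G$. Writing $n_d(F)$ for the number of vertices of degree $d$ in a graph $F$ and setting
$$a_d(x) = |\{y \in N_G(x) : \deg_G(y) = d\}|$$
for each $x \in V(G)$, the multiset $C^2_G(x)$ is determined by, and determines, the sequence $(a_d(x))_d$. A direct count of how each vertex's degree changes when $x$ is removed yields the bookkeeping identity
$$n_d(G\setminus x) = n_d(G) - [d = \deg_G(x)] - a_d(x) + a_{d+1}(x),$$
with an analogous identity in $H$.

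First I would extract four numerical equalities from the hypotheses. Note that $\cor(F)$ encodes the degree sequence of $F$ via the first-round colors $C^1_F$, hence also encodes $|V(F)|$ and $|E(F)|$. From $\cor(G\setminus u) = \cor(H\setminus v)$ I obtain $|V(G)| = |V(H)| = n$ and $n_d(G\setminus u) = n_d(H\setminus v)$ for every $d$. From the equality of the CR-decks, a Kelly-style double count --- each edge $yz$ of $G$ survives in exactly $n-2$ of the cards --- gives $(n-2)|E(G)| = \sum_{x} |E(G\setminus x)|$ and hence $|E(G)| = |E(H)|$; combined with $|E(G\setminus u)| = |E(H\setminus v)|$ this yields $\deg_G(u) = \deg_H(v)$. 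Finally, since $\deg_G(x) = |E(G)| - |E(G\setminus x)|$, the multiset of degrees of $G$ is recoverable from its CR-deck together with $|E(G)|$, so $n_d(G) = n_d(H)$ for every $d$.

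Armed with these four equalities, subtracting the bookkeeping identity for $(H,v)$ from the one for $(G,u)$ collapses the first three terms on the right to zero and leaves
$$a_{d+1}(u) - a_d(u) = a_{d+1}(v) - a_d(v) \qquad \text{for every } d,$$
where $a_d(v)$ is computed in $H$. Since $a_d(u) = a_d(v) = 0$ for $d$ exceeding the common degree $\deg_G(u) = \deg_H(v)$, downward induction on $d$ forces $a_d(u) = a_d(v)$ for all $d$, which is exactly the equality $C^2_G(u) = C^2_H(v)$.

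The genuinely load-bearing step is the reconstruction of the degree sequence of $G$ from its CR-deck rather than from its isomorphism deck; I do not expect a real obstacle here, however, because $\cor$ records per-card degree information faithfully, so the classical Kelly-type edge-counting recurrences carry over verbatim from the isomorphism setting to the CR-equivalence setting.
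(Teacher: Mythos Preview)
The paper does not give its own proof of this lemma; it is quoted from Nash--Williams and used as a black box, so there is no in-paper argument to compare against. Your approach is the classical Kelly-style double count adapted to the CR setting, and it is sound with one slip in the last step. You write that $a_d(u)=a_d(v)=0$ once $d$ exceeds the common degree $\deg_G(u)=\deg_H(v)$, but $a_d(u)$ counts neighbours of $u$ whose degree in $G$ equals $d$, and such a neighbour can certainly have degree larger than $\deg_G(u)$. The correct anchor for the downward induction is that $a_d(u)=a_d(v)=0$ for all $d$ exceeding the maximum degree, which is the same in $G$ and in $H$ because you have already established $n_d(G)=n_d(H)$ for every $d$; with this fix the telescoping goes through unchanged. (As is standard in reconstruction, the Kelly edge count requires $n\ge 3$ so that one may divide by $n-2$; both the lemma and the paper's main theorem fail at $n=2$, witnessed by $K_2$ versus $2K_1$, so this is an ambient assumption rather than a defect of your argument.)
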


If two graphs have different number of vertices, then they are
straightforwardly distinguished by CR. Nevertheless, such a pair of graphs
might still be similar in the sense that the CR colors occurring in
them are the same, and the difference is only in the color
multiplicities. In the proof of Theorem \ref{thm:main}, this
similarity concept will play an important role.

\begin{definition}
  We say that graphs $A$ and $B$ are \emph{CR-similar} and write
  $A\equiv B$ if $\Set{C_A(x)}_{x\in V(A)}=\Set{C_B(x)}_{x\in
    V(B)}$. That is, we require that the \emph{sets} of the colors
  produced by CR on $A$ and $B$ be equal (although the
  \emph{multisets} of these colors might be unequal).
\end{definition}

Note that two graphs with different numbers of vertices can be
CR-similar, for example, $C_3\equiv C_4$.

\begin{lemma}\label{lem:subgraphs}
  \hfill

  \begin{enumerate}[\bf 1.]
  \item
    If $A'$ is a proper subgraph of a connected graph $A$, then $A\not\equiv A'$.
  \item
    Let $A'$ be a proper subgraph of $A$ and $B'$ be a proper subgraph of $B$.
    Suppose that both $A$ and $B$ are connected. Then it is impossible that simultaneously
    $A\equiv B'$ and $B\equiv A'$.
  \end{enumerate}
\end{lemma}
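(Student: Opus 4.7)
My plan starts from the following rigidity property, which holds whenever $G\equiv G'$ and $G$ is connected: the color-class sizes of the two graphs are proportional. Indeed, $G\equiv G'$ means that both graphs exhibit the same set $\mathcal{C}$ of stable CR colors, and the multiplicity $b_{c,c'}$ of color $c'$ in the multiset of colors of the neighbors of a $c$-colored vertex is encoded in $c$'s own infinite color sequence, hence is the same on $G$ and on $G'$. The balance equations $n_c(G)\,b_{c,c'}=n_{c'}(G)\,b_{c',c}$ therefore have identical coefficients on the two graphs, and on the color graph with vertex set $\mathcal{C}$ and an edge between $c$ and $c'$ whenever $b_{c,c'}>0$ they form a system with one-dimensional solution space as soon as that color graph is connected. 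Connectedness of $G$ implies connectedness of its color graph, so there exists $\lambda>0$ with $n_c(G)=\lambda\,n_c(G')$ for every $c\in\mathcal{C}$; summing gives $|V(G)|=\lambda|V(G')|$ and $|E(G)|=\lambda|E(G')|$.

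For part~1, apply this with $G=A$, $G'=A'$. The inclusion $A'\subseteq A$ forces $\lambda\geq 1$. If $\lambda=1$, equal vertex and edge counts combined with $V(A')\subseteq V(A)$ and $E(A')\subseteq E(A)$ give $A=A'$, contradicting properness. To rule out $\lambda>1$, I plan to use a BFS-propagation anchored at a maximum-degree vertex: pick $x\in V(A')$ whose $A'$-color has maximum degree $\Delta$ in $\mathcal{C}$. Since $\Delta$ is also the maximum degree in $A$ (the two graphs share $\mathcal{C}$), we get $\deg_A(x)=\Delta$ and $N_A(x)=N_{A'}(x)\subseteq V(A')$. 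Comparing round-two CR colors at $x$, together with the pointwise dominance $\deg_A(y)\geq\deg_{A'}(y)$ on the shared neighborhood and the equality of the sums of these degrees (forced by the matching round-two color of $x$), one concludes $\deg_A(y)=\deg_{A'}(y)$ for each $y\in N_A(x)$. Iterating along BFS in the connected graph $A$ propagates the equality to every vertex, so $V(A')=V(A)$ and $\lambda=1$, a contradiction.

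For part~2, apply the proportionality from the first paragraph to both hypotheses: from $A\equiv B'$ one gets $n_c(A)=\lambda\,n_c(B')$, and from $B\equiv A'$ one gets $n_c(B)=\mu\,n_c(A')$. Multiplying the vertex-count identities and using $V(B')\subseteq V(B)$ and $V(A')\subseteq V(A)$ yields $|V(A)|\,|V(B)|=\lambda\mu\,|V(A')|\,|V(B')|\leq\lambda\mu\,|V(A)|\,|V(B)|$, so $\lambda\mu\geq 1$; properness of the two subgraphs (via the vertex or the edge form of the same inequality, depending on which inclusion is strict) upgrades this to $\lambda\mu>1$, which I plan to rule out by adapting the BFS-propagation of part~1 to the two-graph setting, anchored at maximum-degree vertices in both $A$ and $B$ simultaneously. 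The main obstacle in both parts is this propagation step: a vertex of $V(A')$ can carry different CR colors in $A$ and in $A'$, so each BFS layer must be bootstrapped from multiset-equality of round-$r$ colors together with pointwise dominance $\deg_A(y)\geq\deg_{A'}(y)$ and the resulting equality of sums; extra care is needed when several colors in $\mathcal{C}$ share the maximum degree, which can be handled by anchoring at a lex-maximal max-degree color or by iteratively refining the choice of anchor.
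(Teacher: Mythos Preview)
Your proportionality setup is correct: the shared stable color set together with the balance equations $n_c\,b_{c,c'}=n_{c'}\,b_{c',c}$ does force $n_c(A)=\lambda\,n_c(A')$ when $A$ is connected, and the case $\lambda=1$ is dispatched cleanly via the vertex and edge counts. The gap is the BFS propagation you invoke to exclude $\lambda>1$. You write that ``the matching round-two color of $x$'' forces $\sum_{y\in N_A(x)}\deg_A(y)=\sum_{y\in N_{A'}(x)}\deg_{A'}(y)$, but there is no such matching: you have only arranged $C^1_A(x)=C^1_{A'}(x)=\Delta$, and a vertex of degree $\Delta$ can carry different round-two colors in $A$ and in $A'$. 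A sharper anchor (among degree-$\Delta$ colors in $\mathcal C$, take the one with maximal neighbor-degree sum and pick $x\in V(A')$ of that color) does squeeze the first layer through, since $\sum_{y\in N_A(x)}\deg_A(y)$ is trapped between the $A'$-value and the global maximum. But at the next layer the vertices $y\in N_A(x)$ are no longer extremal in any sense; knowing only $\deg_A(y)=\deg_{A'}(y)$ gives you no handle on $C^2_A(y)$ versus $C^2_{A'}(y)$, hence no equality of $\sum_{z\in N_A(y)}\deg_A(z)$ with its $A'$-counterpart. Your proposed remedy of ``iteratively refining the choice of anchor'' refines $x$ alone, not the interior BFS vertices, so the propagation stalls after the hand-tuned initial steps. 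The same obstruction blocks the two-graph adaptation in Part~2.

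The paper circumvents layer-by-layer bookkeeping with one global extremal choice: take $D$ larger than the diameter of $A$ and pick $u\in V(A')$ maximizing the number of vertices of the truncated universal-cover tree $U_{A',u}^D$ (equivalently, the number of non-backtracking walks of length at most $D$ from $u$ in $A'$). Since $A'\subsetneq A$ and $D$ exceeds the diameter, some non-backtracking walk from $u$ in $A$ of length at most $D$ uses an edge missing from $A'$, so $U_{A,u}^D$ has strictly more vertices than $U_{A',u}^D$, hence more than $U_{A',w}^D$ for every $w$. By Lemma~\ref{lem:color-cover} the color $C_A^D(u)$ is then unmatched in $A'$, giving $A\not\equiv A'$ directly. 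This is exactly ``maximality propagated through all layers at once''; if you replace your degree-lex anchor by the total walk count up to depth $D$, your argument collapses into the paper's. Part~2 is handled identically with $D$ exceeding both diameters.
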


The proof is based on a characterization of the CR-color $C_G(x)$ in terms of
the universal cover of the graph $G$ rooted at the vertex $x$. We first give
the relevant definitions.

Let $G$ and $K$ be connected graphs, where $G$ is finite graph, while $K$ can be
either finite or infinite. Let $\alpha$ be a surjective homomorphism from $K$ onto
$G$. We call $\alpha$ a \emph{covering map} if its restriction to the neighborhood of
each vertex in $K$ is bijective. If such an $\alpha$ exists, we say that
$K$ \emph{covers} $G$. A graph $U$ is called a \emph{universal cover} of
$G$ if $U$ covers every graph covering $G$. A universal cover
$U=U_G$ of $G$ is unique up to isomorphism. It can be seen as an unfolding of $G$
into a (possibly infinite) tree starting from an arbitrarily chosen vertex $x$ of $G$.
Denote this unfolding by $U_{G,x}$ and note that $U_{G,x}\cong U_G$ whatever $x$.
If $G$ is a tree, then $U_{G,x}\cong G$. If $G$ contains a cycle, then $U_{G,x}$ is an infinite tree.

The unfolding $U_{G,x}$ can formally be described as follows. A sequence of vertices
$x_0x_1\ldots x_k$ in $G$ is called a \emph{walk starting at $x_0$} if every two successive vertices $x_i$ and $x_{i+1}$
are adjacent. If $x_{i+1}\ne x_{i-1}$ for all $0<i<k$, then the walk is \emph{non-backtracking}.
Now, $U_x$ is the graph whose vertices are all non-backtracking walks in $G$ starting at $x_0=x$,
with any two walks of the form $x_0x_1\ldots x_k$ and $x_0x_1\ldots x_k x_{k+1}$
being adjacent.

A straightforward inductive argument shows that a covering map $\alpha$
preserves the coloring produced by CR, that is, $C_K^r(u)=C_G^r(\alpha(u))$
for all $r$. This has the following consequence. Let $x$ be a vertex of a connected graph $G$
and $y$ be a vertex of a connected graph $H$. If $U_{G,x}$ and $U_{H,y}$ are isomorphic as rooted trees,
then $C_G(x)=C_H(y)$. The converse implication is also true; see, e.g., Lemmas 2.3 and 2.4 in \cite{KrebsV15}.
We state this equivalence in a more precise form. Let $U_{G,x}^r$ denote the rooted tree $U_{G,x}$
truncated at level~$r$.

\begin{lemma}\label{lem:color-cover}
 $C_G^r(x)=C_H^r(y)$ if and only if $U_{G,x}^r\cong U_{H,y}^r$, where $\cong$ denotes the isomorphism of rooted trees. 
\end{lemma}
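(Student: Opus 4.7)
The plan is induction on $r$. The base case $r = 0$ is immediate: both $C^0_G(x)$ and $C^0_H(y)$ equal the common initial color, and $U^0_{G,x}$, $U^0_{H,y}$ are each a single-vertex rooted tree, so the equivalence holds trivially. In the inductive step I would exploit the parallel recursive structure of the two sides: $C^{r+1}_G(x) = \Mset{C^r_G(x')}_{x' \in N_G(x)}$, while $U^{r+1}_{G,x}$ has root $x$ with one child per neighbor $x' \in N_G(x)$, and the depth-$r$ subtree below the child ``$x,x'$'' consists of the non-backtracking walks starting at $x'$ that do not step back to $x$.

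For the $(\Leftarrow)$ direction I would invoke the covering-map fact already recorded in the paper: since covering maps preserve CR colors, $C^{r+1}_G(x) = C^{r+1}_{U_{G,x}}(\text{root})$. A short secondary induction on the round index $k$ shows that $C^k_{U^{r+1}_{G,x}}(v) = C^k_{U_{G,x}}(v)$ for every vertex $v$ at depth at most $r+1-k$ in the truncation, because under this depth restriction all neighbors of $v$ relevant to the recursion are present in $U^{r+1}_{G,x}$ with their correct degrees; the only affected vertices are the boundary leaves at depth exactly $r+1$, which the computation at $v$ never reads. Taking $v$ to be the root and $k=r+1$ gives $C^{r+1}_G(x) = C^{r+1}_{U^{r+1}_{G,x}}(\text{root})$, which is an isomorphism invariant of the rooted tree $U^{r+1}_{G,x}$, so an isomorphism $U^{r+1}_{G,x} \cong U^{r+1}_{H,y}$ forces $C^{r+1}_G(x) = C^{r+1}_H(y)$.

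For the $(\Rightarrow)$ direction, the multiset equality yields a bijection $\pi: N_G(x) \to N_H(y)$ with $C^r_G(x') = C^r_H(\pi(x'))$ for every $x'$, whence the inductive hypothesis furnishes isomorphisms $U^r_{G,x'} \cong U^r_{H,\pi(x')}$. Assembling these into an iso $U^{r+1}_{G,x} \cong U^{r+1}_{H,y}$ requires matching the ``back-to-$x$'' child of $x'$ in $U^r_{G,x'}$ with the ``back-to-$y$'' child of $\pi(x')$ in $U^r_{H,\pi(x')}$, because the depth-$r$ subtree of $U^{r+1}_{G,x}$ below the child ``$x,x'$'' is precisely $U^r_{G,x'}$ with its back-to-$x$ child deleted (and symmetrically in $H$). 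To produce such a consistent matching, I would use the standard monotonicity $C^{r+1}_G(x) = C^{r+1}_H(y) \Rightarrow C^k_G(x) = C^k_H(y)$ for all $k \le r+1$ together with the inductive hypothesis at depth $r-1$, which shows that the back-to-$x$ and back-to-$y$ subtrees are themselves isomorphic as rooted trees; the IH-provided iso of $U^r_{G,x'}$ can then be composed with a tree automorphism permuting children of equal subtree type to send the back-subtrees to one another.

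The main technical obstacle is performing these modifications simultaneously for every neighbor $x'$ and coherently across the descending chain of depths. I would handle it by strengthening the inductive hypothesis to select, for every $r$ and every pair $(G,x)$, $(H,y)$ with matching $C^r$-colors, a single isomorphism $U^r_{G,x} \cong U^r_{H,y}$ in such a way that these isomorphisms respect restriction to lower depths; the base case $r=0$ fixes the choice trivially, and the automorphism flexibility of rooted trees absorbs the constraints passed down from higher depths.
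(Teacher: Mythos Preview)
The paper does not prove this lemma; it records it with a pointer to Lemmas~2.3 and~2.4 of~\cite{KrebsV15}. Your outline is the standard one, and the $(\Leftarrow)$ direction is fine: the secondary induction showing that truncation leaves $C^k$ unchanged at vertices of depth at most $r+1-k$ is exactly the right bookkeeping.

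The $(\Rightarrow)$ direction has a real gap. You correctly isolate the difficulty: the subtree of $U^{r+1}_{G,x}$ below the child for $x'$ is $U^r_{G,x'}$ with its back-to-$x$ branch excised, so from the IH isomorphism $U^r_{G,x'}\cong U^r_{H,\pi(x')}$ you still need the back-to-$x$ branch to match the back-to-$y$ branch. Your fix --- composing with an automorphism that permutes children of equal subtree type --- is circular: such an automorphism can carry the image of the back-to-$x$ branch onto the back-to-$y$ branch only if those two branches are already isomorphic, which is the very claim at issue. Invoking ``the inductive hypothesis at depth $r-1$'' does not help either: that gives $U^{r-1}_{G,x}\cong U^{r-1}_{H,y}$, whereas the back-to-$x$ branch is $U^{r-1}_{G,x}$ with \emph{its} back-to-$x'$ branch removed, so you face the same problem one level down. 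Your proposed strengthening (choosing the isomorphisms coherently across depths) only controls how $U^r_{G,x'}\cong U^r_{H,\pi(x')}$ restricts to $U^{r-1}_{G,x'}\cong U^{r-1}_{H,\pi(x')}$; it says nothing about which child corresponds to $x$ versus~$y$. A strengthening that does work: writing $T^{r}(x_0,x)$ for the depth-$r$ tree of non-backtracking walks from $x$ whose first step avoids a prescribed neighbour $x_0$, prove by induction on $r$ that the rooted isomorphism type of $T^{r}(x_0,x)$ is determined by the pair $\big(C^{r}_G(x),\,C^{r-1}_G(x_0)\big)$. In the inductive step, $C^{r+1}_G(x)$ yields the multiset of $C^{r}_G(x')$ over $x'\in N_G(x)$, and together with $C^{r-1}_G(x)$ this determines each $T^{r-1}(x,x')$ by the IH; knowing $C^{r}_G(x_0)$ then identifies which element to delete. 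Equivalently, the back-and-forth between $(x,x')$ and $(x',x)$ that you set up is not an infinite regress --- the depth drops by one at each step and bottoms out at~$0$ --- and making that descent explicit is what your argument is missing.
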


The characterization of CR-colors in terms of universal covers readily implies the following equivalences.

\begin{lemma}\label{lem:sim-conn}
  The following conditions are equivalent for connected graphs $G$ and $H$:
  \begin{enumerate}
  \item $G\equiv H$;
  \item
    $U_G\cong U_H$;
  \item
    $\Set{C_G(x)}_{x\in V(G)}\cap\Set{C_H(y)}_{y\in V(H)}\ne\emptyset$.
  \end{enumerate}
\end{lemma}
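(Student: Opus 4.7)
The plan is to use Lemma \ref{lem:color-cover} as the bridge between CR-colors and rooted universal covers: the color sequence $C_G(x)$ records precisely the rooted-tree isomorphism type of $U_{G,x}$, and for a connected graph $G$ the tree $U_{G,x}$ is isomorphic as an unrooted graph to $U_G$ for every vertex $x$. Under this translation each of the three implications becomes a short manipulation with (rooted) trees.

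The implication $(1)\Rightarrow(3)$ is immediate, since the two color sets are nonempty and equal, hence intersect. For $(3)\Rightarrow(2)$, I would fix vertices $x\in V(G)$ and $y\in V(H)$ with $C_G(x)=C_H(y)$. By Lemma \ref{lem:color-cover}, this gives $U_{G,x}^r\cong U_{H,y}^r$ as rooted trees for every $r$. Both rooted trees are locally finite, with degrees bounded by the maximum degree in $G$ and $H$, so a standard K\"onig-type compactness argument extends the truncation isomorphisms to an isomorphism $U_{G,x}\cong U_{H,y}$ of the infinite rooted trees; forgetting the roots yields $U_G\cong U_H$.

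For $(2)\Rightarrow(1)$, let $\phi\colon U_G\to U_H$ be a graph isomorphism and write $\pi_G\colon U_G\to G$ and $\pi_H\colon U_H\to H$ for the covering maps. Given any $x\in V(G)$, pick a preimage $\tilde x\in\pi_G^{-1}(x)$, so that $U_G$ rooted at $\tilde x$ is isomorphic, as a rooted tree, to $U_{G,x}$. Put $y=\pi_H(\phi(\tilde x))$; then $U_H$ rooted at $\phi(\tilde x)$ is isomorphic to $U_{H,y}$, and $\phi$ carries the first rooted tree to the second. Hence $U_{G,x}\cong U_{H,y}$ as rooted trees, so Lemma \ref{lem:color-cover} yields $C_G(x)=C_H(y)$. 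Thus every color of $G$ appears in $H$; by symmetry the two sets of colors coincide, which is condition~(1).

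The main obstacle I anticipate is the passage from finite truncations to the infinite rooted trees in $(3)\Rightarrow(2)$, since the universal covers can be infinite; this is resolved by K\"onig's lemma applied to the finite, consistent system of rooted-tree isomorphisms $U_{G,x}^r\cong U_{H,y}^r$ guaranteed by Lemma \ref{lem:color-cover}. Everything else is a routine translation through that same lemma.
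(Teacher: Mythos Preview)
Your argument is correct and follows exactly the route the paper indicates: the paper does not spell out a proof of this lemma but states that it ``readily'' follows from the characterization of CR-colors via universal covers (the text preceding Lemma~\ref{lem:color-cover} together with Lemma~\ref{lem:color-cover} itself). The only difference is that the paper packages your K\"onig-type compactness step into the cited equivalence $C_G(x)=C_H(y)\Leftrightarrow U_{G,x}\cong U_{H,y}$ from \cite{KrebsV15}, whereas you derive this passage from truncations explicitly; both are fine.
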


Lemma \ref{lem:sim-conn} has a simple consequence about the CR-similarity of not necessarily
connected graphs. For graphs $G_1,\ldots,G_s$, we write $G_1+\cdots+G_s$ to denote the
vertex-disjoint union of these graphs.

\begin{lemma}\label{lem:sim-disconn}
  \hfill
  
  \begin{enumerate}[\bf 1.]
  \item If $G\equiv H$, then for every connected component $G'$ of $G$ there is
    a connected component $H'$ of $H$ such that $G'\equiv H'$ (and vice versa).
  \item
    If $G\equiv H$ where $G$ is connected and $H=H'+H''$, then $G\equiv H'\equiv H''$.
  \end{enumerate}
\end{lemma}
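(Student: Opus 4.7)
The plan is to exploit the fact that color refinement acts independently on connected components. Specifically, if $G=G_1+\cdots+G_s$ is the decomposition of $G$ into its connected components, then for any vertex $x\in V(G_i)$ the neighborhood $N_G(x)=N_{G_i}(x)$ lies entirely inside $G_i$, and since the initial coloring is monochromatic across all graphs, a straightforward induction on the refinement round $r$ yields $C_G^r(x)=C_{G_i}^r(x)$. Consequently $C_G(x)=C_{G_i}(x)$, and the color set of $G$ decomposes as $\Set{C_G(x)}_{x\in V(G)}=\bigcup_{i=1}^s\Set{C_{G_i}(x)}_{x\in V(G_i)}$.

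For Part~1, I would pick any vertex $x$ in the connected component $G'$ and use $G\equiv H$ to obtain some $y\in V(H)$ with $C_H(y)=C_G(x)$. Letting $H'$ be the connected component of $H$ containing $y$, the component-wise decomposition gives $C_{G'}(x)=C_G(x)=C_H(y)=C_{H'}(y)$, so the color sets of the connected graphs $G'$ and $H'$ share a common element. Lemma~\ref{lem:sim-conn} (condition 3 implies condition 1) then yields $G'\equiv H'$.

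For Part~2, I would apply Part~1 in the direction from $H$ to $G$: each connected component $H_j$ of $H$ is CR-similar to some component of $G$. Since $G$ is connected, it is its own unique component, forcing $G\equiv H_j$ for every $j$. The decomposition $H=H'+H''$ partitions the components of $H$ between $H'$ and $H''$, and applying the component-wise color decomposition once more gives $\Set{C_{H'}(x)}_{x\in V(H')}=\Set{C_G(x)}_{x\in V(G)}$, i.e.\ $G\equiv H'$, and analogously $G\equiv H''$.

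The main conceptual content of the argument is the observation that the CR color of a vertex depends only on its own connected component, so I do not expect any serious obstacle beyond setting this up carefully. The one small subtlety worth flagging is that $H'$ and $H''$ in Part~2 need not be connected themselves, which is why one cannot directly apply Lemma~\ref{lem:sim-conn} to them; decomposing them further into their components sidesteps this issue cleanly.
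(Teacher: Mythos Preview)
Your proposal is correct and follows exactly the route the paper intends: the paper states Lemma~\ref{lem:sim-disconn} as a ``simple consequence'' of Lemma~\ref{lem:sim-conn} without spelling out the argument, and what you have written is precisely the natural elaboration---use that CR colors depend only on the ambient connected component, then invoke the implication $(3)\Rightarrow(1)$ of Lemma~\ref{lem:sim-conn}. Your handling of Part~2, in particular the observation that $H'$ and $H''$ need not be connected and must therefore be further decomposed into components before Lemma~\ref{lem:sim-conn} applies, is the right way to close the argument.
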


\begin{proof}[Proof of Lemma~\ref{lem:subgraphs}]
  Let us extend the notation $U_{G,x}^r$ to not necessarily connected graphs
  by setting $U_{G,x}^r=U_{G',x}^r$ where $G'$ is the connected component of $G$ containing
  the vertex $x$.
  
  \textit{1.}  Let the positive integer $D$ be one more than the
  diameter of $A$.  Fix a vertex $u$ of the subgraph $A'$ maximizing
  the number of vertices in the truncated tree $U_{A',u}^D$. The
  description of a universal cover in terms of non-backtracking walks
  readily implies that $U_{A,u}^D$ has strictly more vertices than
  $U_{A',u}^D$ and, hence, than $U_{A',w}^D$ for any vertex $w$ of
  $A'$.  Indeed, our choice of $D$ ensures that there is a walk of
  length at most $D$ starting from $u$ and passing through an edge of
  $A$ absent in $A'$.  Lemma~\ref{lem:color-cover}, therefore, implies
  that $C_A(u)\ne C_{A'}(w)$ for any $w\in V(A')$.

  \textit{2.}  Towards a contradiction, assume that $A\equiv B'$ and
  $B\equiv A'$.  Let $D$ be the positive integer that is one more than
  the larger of the diameters of $A$ and $B$.  Choose a vertex $u$ of
  the subgraph $A'$ maximizing the number of vertices in the truncated
  tree $U_{A',u}^D$. As in the proof of the first part, note that
  $U_{A,u}^D$ has strictly more vertices than $U_{A',u}^D$ and, hence,
  than $U_{A',w}^D$ for any vertex $w$ of $A'$. The assumption
  $B\equiv A'$ implies by Lemma \ref{lem:color-cover} that every tree
  $U_{A',w}^D$ for $w\in V(A')$ has an isomorphic mate $U_{B,w'}^D$
  for some $w'\in V(B)$ and vice versa. Therefore, $U_{A,u}^D$ has
  strictly more vertices than $U_{B,w}^D$ for any vertex $w$ of
  $B$. The same argument with the roles of $A$ and $B$ interchanged
  provides us with a vertex $v$ in $B$ which has strictly more
  vertices than $U_{A,w}^D$ for any vertex $w$ of $A$, in particular,
  than $U_{A,u}^D$. This contradiction completes the proof.
\end{proof}

\section{Proof of Theorem \ref{thm:main}}

Let $G$ be a connected graph and $H$ any disconnected graph.  We have
to prove that the multisets $\Mset{\cor(G\setminus x)}_{x\in V(G)}$ and
$\Mset{\cor(H\setminus y)}_{y\in V(H)}$ are unequal.  We will assume that these
multisets are equal and derive a contradiction.

Let $H_1,\ldots,H_s$, where $s\ge2$, be the connected components of $H$.
Let $V=V(H)$ and $V_i=V(H_i)$ for $i\le s$.
Identify $V(G)$ with $V$ so that
\begin{equation}
  \label{eq:CRGH}
\cor(G\setminus v)=\cor(H\setminus v)
\end{equation}
for all $v\in V$.
By Lemma \ref{lem:GHuv}, this yields
$$
C^2_G(v)=C^2_H(v)
$$
for all $v\in V$. Taking into account the decomposition of $H$ into connected components, we have
\begin{equation}
  \label{eq:CGHvVi}
C^2_G(v)=C^2_{H_i}(v) \text{ if } v\in V_i.
\end{equation}

Denote the set of articulation (or cut) vertices of $G$ by $A$, and set $B=V\setminus A$.
Let $B_i=B\cap V_i$. If $v\in B_i$, then $G\setminus v$ is connected and
$H\setminus v=H_1+\ldots+H_i\setminus v+\ldots+H_s$. By Part 2 of Lemma \ref{lem:sim-disconn},
Equality \refeq{CRGH} implies that
\begin{equation}
  \label{eq:GvHiHj}
  G\setminus v\equiv H_i\setminus v\equiv H_j\text{ if }v\in B_i\text{ and }j\ne i.
\end{equation}
Applying the similarity $G\setminus v\equiv H_j$ to any two vertices in $B_i$,
we derive the similarity $G\setminus v\equiv G\setminus v'$ whenever $v,v'\in B_i$.
Applying Lemma \ref{lem:GHuv} in the particular case of $G=H$, we conclude that
\begin{equation}
  \label{eq:CGvvBi}
C^2_G(v)=C^2_G(v')\text{ for all }v,v'\in B_i.
\end{equation}

Let $i\ne j$ and assume that $B_i\ne\emptyset$ and $B_j\ne\emptyset$.
Let $x\in B_i$ and $y\in B_j$. By \refeq{GvHiHj}, we obtain
$H_i\setminus x\equiv H_j$ and $H_j\setminus y\equiv H_i$, which contradicts Part 2 of Lemma \ref{lem:subgraphs}.
Therefore, only one of the sets $B_i$ is not empty, that is, $B=B_i$ for some $i$.
Without loss of generality, we suppose that $B=B_1$.

\begin{claim}\label{cl:}
  $B_1=V_1$.
\end{claim}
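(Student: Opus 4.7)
The plan is to argue by contradiction: suppose some vertex $u \in V_1 \cap A$ exists. Since the connected graph $G$ has at least two vertices, it has at least two non-articulation vertices, so $|B_1| = |B| \geq 2$; in particular we may fix a vertex $v \in B_1$ with $v \neq u$. From \refeq{GvHiHj} we know $G \setminus v \equiv H_j$ for every $j \geq 2$, and applying Part 1 of Lemma \ref{lem:sim-disconn} to the equality $\cor(G \setminus v) = \cor(H \setminus v)$ also gives $G \setminus v \equiv C$ for every connected component $C$ of $H_1 \setminus v$. Hence all these connected graphs share a common universal cover $U := U_{G \setminus v}$.

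Next, I would decompose the deck equality $\cor(G \setminus u) = \cor(H \setminus u)$ over connected components. Since $u \in A$, the graph $G \setminus u$ has components $D_1, \ldots, D_t$ with $t \geq 2$, while $H \setminus u = (H_1 \setminus u) + H_2 + \cdots + H_s$. Because CR-colors of a vertex depend only on its own connected component, for any CR-similarity class $\mathcal{C}$ the total number of vertices in components of $G \setminus u$ whose universal cover lies in $\mathcal{C}$ must equal the corresponding total on the $H$-side. Taking $\mathcal{C}$ to be the $U$-class, the components $H_2, \ldots, H_s$ alone contribute $n - |V_1| \geq 1$ vertices on the $H$-side, so at least one $D_k$ has universal cover $U$.

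The key step will be to exhibit a $U$-class component $D$ of $G \setminus u$ that avoids some vertex $v' \in B_1$. If such $D$ exists, then $V(D) \subseteq V \setminus \{u, v'\} \subsetneq V(G \setminus v')$, so $D$ is a proper subgraph of the connected graph $G \setminus v'$; both being in the $U$-class, we have $D \equiv G \setminus v'$, directly contradicting Part 1 of Lemma \ref{lem:subgraphs}. The main obstacle will be the edge case in which every $U$-class component of $G \setminus u$ contains every vertex of $B_1$: this forces a unique $U$-class component $D^*$ with $B_1 \subseteq V(D^*)$, and every other component of $G \setminus u$ must lie in a different similarity class and match a non-$U$-class component of $H_1 \setminus u$. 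To handle this subcase, I would apply Part 2 of Lemma \ref{lem:subgraphs} to the connected graphs $G \setminus v$ and $H_1$: $D^*$ is CR-similar to $G \setminus v$ but, since $v \in V(D^*)$, $D^*$ is not itself a subgraph of $G \setminus v$, so further analysis using \refeq{CGHvVi} and the already-established constraint $V_2 \cup \cdots \cup V_s \subseteq A$ would be needed to arrange the crossing proper-subgraph relationships demanded by Part 2 of Lemma \ref{lem:subgraphs}.
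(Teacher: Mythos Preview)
Your argument has a genuine gap: you isolate the ``edge case'' in which the unique $U$-class component $D^*$ of $G\setminus u$ contains all of $B_1$, but then leave it open with a vague appeal to Part~2 of Lemma~\ref{lem:subgraphs} and unspecified ``further analysis.'' The missing observation is that this edge case is impossible. Since $u$ is a cut vertex of $G$, each connected component of $G\setminus u$ corresponds to a subtree of the block--cut tree of $G$ and therefore contains at least one non-cut vertex of $G$, i.e., a vertex of $B=B_1$. As $G\setminus u$ has at least two components, $B_1$ cannot lie entirely inside one of them; hence any $U$-class component (you already showed one exists) avoids some $v'\in B_1$, and your main step finishes the proof directly.

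The paper uses exactly this idea, organised more simply. Writing the cut vertex in $V_1$ as $v$ (your $u$), it applies Part~1 of Lemma~\ref{lem:sim-disconn} to $\cor(G\setminus v)=\cor(H\setminus v)$ to get a component $G'$ of $G\setminus v$ with $G'\equiv H_2$, and then chooses $x\in B=B_1$ in a \emph{different} component of $G\setminus v$. By \refeq{GvHiHj} one has $G\setminus x\equiv H_2\equiv G'$, while $G'$ is a proper subgraph of the connected graph $G\setminus x$, contradicting Part~1 of Lemma~\ref{lem:subgraphs}. Your preliminaries---fixing $v\in B_1$ in advance, passing to the components of $H_1\setminus v$, and vertex-counting in $U$-classes---are unnecessary once the block--cut fact is in hand.
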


\begin{subproof}
  Towards a contradiction, assume that $V_1$ contains a cut vertex $v$ of $G$.
  By Part 1 of Lemma \ref{lem:sim-disconn}, Equality \refeq{CRGH} implies that
  the disconnected graph $G\setminus v$ has a connected component $G'$
  such that $G'\equiv H_2$. Choose a vertex $x\in B$ in a connected component of $G\setminus v$
  different from $G'$. Since $x\in B_1$, the similarity \refeq{GvHiHj} implies that
  $G\setminus x\equiv H_2$ and, therefore, $G\setminus x\equiv G'$. However, $G'$ is a subgraph of the connected graph $G\setminus x$,
  contrary to Part 1 of Lemma~\ref{lem:subgraphs}.
\end{subproof}

By Claim \ref{cl:} and Equality \refeq{CGvvBi}, we have
$$
C^2_G(x)=C^2_G(x')\text{ for all }x,x'\in V_1.
$$
Taking into account Equality \refeq{CGHvVi}, this implies that
\begin{equation}
  \label{eq:CH1vv}
C^2_{H_1}(x)=C^2_{H_1}(x')\text{ for all }x,x'\in V_i.
\end{equation}
In particular, $H_1$ is a regular graph, say, of degree~$d$.

Consider the block-cut tree of $G$. Recall that it describes the decomposition
of $G$ into biconnected components, that is, maximal biconnected subgraphs, which are also
called blocks. The vertex set of the tree consists of the blocks and the cut vertices of $G$.
A block $M$ is adjacent to a cut vertex $c$ if $c\in M$.

Let $L$ be a block that is a leaf in the block-cut tree of $G$.
Let $a$ be the single cut vertex of $G$ belonging to $L$.
By Claim \ref{cl:}, $a$ belongs to $A=V\setminus B=V_2\cup\ldots\cup V_s$,
and the other vertices of $L$ belong to $B=B_1=V_1$.
Without loss of generality, suppose that $a\in V_2$.
Let $x$ be a vertex in $L$ adjacent to $a$. Since $x\in B_1$, the similarity \refeq{GvHiHj} implies that
\begin{equation}
  \label{eq:GxH1H2}
  G\setminus x\equiv H_1\setminus x\equiv H_2.
\end{equation}

Since $H_1$ is a $d$-regular graph, every vertex of $H_1\setminus x$ has degree $d-1$ or $d$.
By the similarity \refeq{GxH1H2}, the neighbors of $a$ in $H_2$ are of degree $d-1$ or $d$.
By Equality \refeq{CGHvVi}, the same holds true for $a$ in $G$.
Therefore, every connected component $G'$ of $G\setminus a$ has a vertex of degree $d-2$ or $d-1$
(namely a neighbor of $a$ in $G$). As a consequence, $G'\not\equiv H_1$ for every $G'$. On the other hand,
Equality \refeq{CRGH} implies that
$$
G\setminus a\equiv H_1+H_2\setminus a+\cdots+H_s.
$$
By Part 1 of Lemma \ref{lem:sim-disconn}, we conclude that $H_1\equiv G'$ for some component $G'$ of $G\setminus a$.
This contradiction completes the proof of Theorem~\ref{thm:main}.


\end{document}